\newtheorem{theorem}{Theorem}
\newtheorem{corollary}[theorem]{Corollary}
\newtheorem{proposition}[theorem]{Proposition}
\newtheorem{remark}[theorem]{Remark}
\newenvironment{proof}[1][Proof]{\noindent\textbf{#1.} }{\ \rule{0.5em}{0.5em}}
\begin{document}

\title{Remarks on a paper by Cordero and Nicola on Feichtinger's Wiener amalgam
spaces and the Schr\"{o}dinger equation}
\author{Maurice de Gosson\\\ Jacob's University, Bremen}
\maketitle

\begin{abstract}
We derive some consequences of very recent results of Cordero and Nicola on
the metaplectic representation, the Wiener amalgam spaces, (whose definition
is due to Feichtinger), and their applications to the regularity of the
solutions of Schr\"{o}dinger equation with quadratic Weyl symbol. We do not
however discuss the validity of Cordero and Nicola's claims.

\end{abstract}

\section{Introduction}

There are very few results in the literature about the regularity of the
solutions of Schr\"{o}dinger's equation in terms of the Wiener amalgam spaces
introduced by Feichtinger in the early 1980s. Very recently, Cordero and
Nicola \cite{cono2} have proposed such a study for the quantum oscillator
(their study actually applies to larger classes of quadratic Hamiltonians as
well); their study is based on properties of the action of the metaplectic
group on Feichtinger's spaces they prove.

The aim of this short Note is to derive a few consequences of Cordero and
Nicola's results by using the complete properties of the metaplectic operators
they invoke. We do not, however, discuss the validity of their results.

\section{The metaplectic group}

Let $\operatorname*{Sp}(n)$ be the symplectic group of the space
$\mathbb{R}^{2n}$ equipped with the standard symplectic structure
$\sigma=dp\wedge dx$. It is a connected classical Lie group, contractible to
the unitary group $U(n,\mathbb{C})$. We thus have $\pi_{1}[\operatorname*{Sp}%
(n)]=(Z,+)$ and $\operatorname*{Sp}(n)$ therefore has covering groups of all
orders. It turns out that the double cover $\operatorname*{Sp}_{2}(n)$ can be
faithfully represented by a group of unitary operators on $L^{2}%
(\mathbb{R}^{n})$. That group is the metaplectic group $\operatorname*{Mp}%
(n)$. Since the projection $\pi:\operatorname*{Mp}(n)\longrightarrow
\operatorname*{Sp}(n)$ is two-to-one, one associates to each $\mathcal{A}%
\in\operatorname*{Sp}(n)$ \emph{two} operators $\pm\mu(\mathcal{A)}%
\in\operatorname*{Mp}(n)$. In particular, if%
\begin{equation}
\mathcal{A=}%
\begin{bmatrix}
A & B\\
C & D
\end{bmatrix}
\text{ \ , \ }\det B\neq0 \label{free}%
\end{equation}
then
\begin{equation}
\pm\mu(\mathcal{A})f(x)=(2\pi\hbar)^{-n/2}i^{m-n/2}|\det B|^{-1/2}\int
e^{^{\frac{i}{\hbar}}W(x,x^{\prime})}f(x^{\prime})d^{n}x^{\prime} \label{sw1}%
\end{equation}
where $m$ ("the Maslov index \cite{Wiley,Leray}") corresponds to a choice of
$\arg\det B$ and
\[
W(x,x^{\prime})=\tfrac{1}{2}DB^{-1}x\cdot x-B^{-1}x\cdot x^{\prime}+\tfrac
{1}{2}B^{-1}Ax^{\prime}\cdot x^{\prime}.
\]

It turns out that \emph{each} $\mu(\mathcal{A)}$ can be written as a product
of exactly \emph{two }operators of the type above. In fact, writing $\mu
_{W,m}(\mathcal{A})$ for the operator (\ref{sw1}) one the Maslov index $m$ has
been chosen:

\begin{proposition}
\label{un}For every operator $\mu(\mathcal{A})\in\operatorname*{Mp}(n)$ there
exist quadratic forms $W$, $W^{\prime}$ and integers (modulo 4) $m$ and
$m^{\prime}$ such that
\[
\mu(\mathcal{A})=\mu_{W,m}(\mathcal{A})\mu_{W^{\prime},m^{\prime}}%
(\mathcal{A})\text{. }%
\]
(This factorization is of course not unique).
\end{proposition}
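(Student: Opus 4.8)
The plan is to reduce the assertion to a purely symplectic factorization and then lift it to $\operatorname*{Mp}(n)$. The product of two quadratic Fourier transforms is again a metaplectic operator whose projection under $\pi$ is the product of the corresponding symplectic matrices, since $\pi$ is a homomorphism and each $\mu_{W,m}(\mathcal{A})$ is a $\pi$-preimage of the free matrix $\mathcal{A}$. Hence it suffices to show two things: (i) every $\mathcal{A}\in\operatorname*{Sp}(n)$ can be written $\mathcal{A}=\mathcal{A}_{1}\mathcal{A}_{2}$ with both factors free, i.e. with invertible upper-right block; and (ii) the integers $m,m^{\prime}$ can be adjusted so that the resulting operator is exactly $\mu(\mathcal{A})$ and not $-\mu(\mathcal{A})$.

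For (i) I would argue geometrically on the Lagrangian Grassmannian $\Lambda(n)$. Writing $\ell_{0}=\{0\}\times\mathbb{R}^{n}$ for the ``momentum'' plane, the block computation $\mathcal{A}\ell_{0}=\{(Bp,Dp):p\in\mathbb{R}^{n}\}$ shows that $\mathcal{A}$ is free precisely when $\mathcal{A}\ell_{0}$ is transverse to $\ell_{0}$. Now the set of Lagrangian planes transverse to a fixed one is open and dense in $\Lambda(n)$ (its complement, the Lagrangians meeting the plane nontrivially, is a subvariety of positive codimension), so the transversals of $\ell_{0}$ and those of $\mathcal{A}\ell_{0}$ share a common element $L$. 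Since $\operatorname*{Sp}(n)$ acts transitively on $\Lambda(n)$, choose $\mathcal{A}_{1}$ with $\mathcal{A}_{1}\ell_{0}=L$; because $L$ is transverse to $\ell_{0}$, the matrix $\mathcal{A}_{1}$ is free. Setting $\mathcal{A}_{2}=\mathcal{A}_{1}^{-1}\mathcal{A}$ one has $\mathcal{A}_{2}\ell_{0}=\mathcal{A}_{1}^{-1}(\mathcal{A}\ell_{0})$, which is transverse to $\ell_{0}$ iff $\mathcal{A}\ell_{0}$ is transverse to $\mathcal{A}_{1}\ell_{0}=L$; this holds by the choice of $L$, so $\mathcal{A}_{2}$ is free as well. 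Equivalently one may proceed algebraically: using the symplectic relations one checks $[A\;B]$ has rank $n$, and then that $\det(A-BP)\neq0$ for generic symmetric $P$, which exhibits $\mathcal{A}$ as the product of two free matrices via the elementary free matrix $\left[\begin{smallmatrix}P&I\\-I&0\end{smallmatrix}\right]$.

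It remains to settle the sign in (ii), which is where the Maslov bookkeeping enters. Fix quadratic forms $W,W^{\prime}$ generating $\mathcal{A}_{1},\mathcal{A}_{2}$ and any admissible indices $m_{0},m_{0}^{\prime}$. Then $\mu_{W,m_{0}}(\mathcal{A}_{1})\mu_{W^{\prime},m_{0}^{\prime}}(\mathcal{A}_{2})=\varepsilon\,\mu(\mathcal{A})$ with $\varepsilon=\pm1$, since both sides project onto $\mathcal{A}$ and $\ker\pi=\{\pm I\}$. The prefactor $i^{m-n/2}$ in (\ref{sw1}) shows that replacing $m_{0}$ by $m_{0}+2$ multiplies $\mu_{W,m_{0}}(\mathcal{A}_{1})$ by $i^{2}=-1$; hence, using $m=m_{0}$ or $m=m_{0}+2$ (mod $4$), we can force $\varepsilon=+1$ and obtain the stated factorization. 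The non-uniqueness is then manifest from the freedom in the choice of $L$, of $\mathcal{A}_{1}$, and of the generating forms.

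The main obstacle is step (i): one must guarantee that an arbitrary $\mathcal{A}$, whose own block $B$ may be singular (so that $\mathcal{A}$ is not itself a quadratic Fourier transform), nevertheless splits into two free factors. The transversality argument makes this transparent, the only delicate point being that the common transversal $L$ must be realized as $\mathcal{A}_{1}\ell_{0}$ by a \emph{single} symplectic matrix, which is exactly the transitivity of $\operatorname*{Sp}(n)$ on Lagrangian planes. Once (i) is secured, the Maslov/sign step (ii) is routine.
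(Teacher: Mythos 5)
Your proof is correct. Note, however, that the paper itself gives no argument for this proposition: its ``proof'' is simply a pointer to the author's earlier works (where the result is established by essentially the route you took), so you have supplied what the paper leaves implicit. Your two steps are both sound: (i) the criterion that $\mathcal{A}$ is free exactly when $\mathcal{A}\ell_{0}$ is transverse to $\ell_{0}=\{0\}\times\mathbb{R}^{n}$ is right (if $Bp=0$ and $Dp=0$ then $p=0$, so $\mathcal{A}\ell_{0}\cap\ell_{0}=\{0\}$ iff $\ker B=\{0\}$), the intersection of the two open dense sets of common transversals is nonempty, transitivity of $\operatorname*{Sp}(n)$ on $\Lambda(n)$ produces $\mathcal{A}_{1}$, and symplectic maps preserve transversality, so $\mathcal{A}_{2}=\mathcal{A}_{1}^{-1}\mathcal{A}$ is free; (ii) since $\pi$ is a homomorphism with $\ker\pi=\{\pm I\}$, the lifted product equals $\pm\mu(\mathcal{A})$, and the prefactor $i^{m-n/2}$ in (\ref{sw1}) shows that $m_{0}\mapsto m_{0}+2$ (mod $4$), which is the other admissible index for the same $W$, flips the sign, so the sign can always be corrected. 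Your reading of the statement is also the right one: the paper's notation $\mu_{W,m}(\mathcal{A})\mu_{W^{\prime},m^{\prime}}(\mathcal{A})$ is sloppy, since the free symplectic matrices generated by $W$ and $W^{\prime}$ are your $\mathcal{A}_{1}$ and $\mathcal{A}_{2}$, not $\mathcal{A}$ itself (otherwise the product would project onto $\mathcal{A}^{2}$). One small caveat: your alternative algebraic route via $\det(A-BP)\neq0$ for generic symmetric $P$ is stated but not proved (the nonvanishing for some symmetric $P$ needs the symplectic relations, and the cleanest proof of it is again the transversality argument), so treat it as a remark rather than a second complete proof; the geometric argument you give in full is the one that carries the proposition.
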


\begin{proof}
See de Gosson \cite{AIF,Cocycles,Wiley,ICP,Birk}.
\end{proof}

\begin{remark}
A similar result holds on the matrix level: every $\mathcal{A}\in
\operatorname*{Sp}(n)$ can be written as a product of two symplectic matrices
(\ref{free}).
\end{remark}

\section{On Cordero and Nicola's estimate}

In \cite{cono2} Cordero and Nicola prove the following continuity result for
metaplectic operators on Wiener amalgam spaces (Theorem 4.1, formula (25), p. 17):

\begin{proposition}
\label{procono}Let $\mathcal{A}\in\operatorname*{Sp}(n)$ be as in
(\ref{free}). Then, for $1\leq p,q\leq\infty$,
\begin{equation}
||\mu(\mathcal{A})f||_{W(\mathcal{F}L^{p},L^{q})}\leq\alpha(\mathcal{A}%
,p,q)||f||_{W(\mathcal{F}L^{q},L^{p})}\text{.} \label{conest}%
\end{equation}
where $\alpha(\mathcal{A},p,q)>0$
\end{proposition}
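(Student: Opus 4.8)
The plan is as follows. Since condition (\ref{free}) already includes $\det B\neq0$, the operator $\mu(\mathcal{A})$ is \emph{itself} of the form (\ref{sw1}); one does not need the factorization of Proposition \ref{un} here (that refinement is only required when $B$ is singular). I would therefore analyze a single free operator by peeling its kernel apart into elementary pieces. Grouping the three homogeneous terms of the generating function $W(x,x')$ and carrying out the $x'$-integration as a Fourier transform, I would write
\[
\mu(\mathcal{A})f(x)=i^{m-n/2}|\det B|^{-1/2}\,e^{\frac{i}{2\hbar}DB^{-1}x\cdot x}\,\big(\mathcal{F}_{\hbar}u\big)(B^{-1}x),\qquad u=e^{\frac{i}{2\hbar}B^{-1}A\,x'\cdot x'}f,
\]
where $\mathcal{F}_{\hbar}$ is the normalized ($\hbar$-)Fourier transform, so that
\[
\mu(\mathcal{A})=M_{DB^{-1}}\circ T_{B^{-1}}\circ\mathcal{F}_{\hbar}\circ M_{B^{-1}A}.
\]
Here $M_{P}$ denotes multiplication by the chirp $e^{\frac{i}{2\hbar}Px\cdot x}$ and $T_{B^{-1}}g=g\circ B^{-1}$. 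Because $\mathcal{A}\in\operatorname*{Sp}(n)$ the matrices $B^{-1}A$ and $DB^{-1}$ are symmetric, so $M_{B^{-1}A}$ and $M_{DB^{-1}}$ are genuine quadratic chirps and $T_{B^{-1}}$ is an invertible linear change of variables.

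I would then obtain (\ref{conest}) by chaining the continuity of the four factors between the appropriate time-frequency spaces. Two facts drive the argument. First, a linear change of variables and multiplication by a quadratic chirp act boundedly on the scales of Wiener amalgam (and modulation) spaces without altering the pair of Lebesgue exponents, the chirps being exactly the metaplectic images of the symplectic shears $\begin{bmatrix}I&0\\P&I\end{bmatrix}$. Second, the Fourier transform intertwines the two scales through the isomorphism $\mathcal{F}\colon W(\mathcal{F}L^{r},L^{s})\to M^{r,s}$ (equivalently $\|\mathcal{F}g\|_{M^{r,s}}=\|g\|_{W(\mathcal{F}L^{r},L^{s})}$); it is precisely this single Fourier transform sitting in the middle of the factorization that is responsible for the interchange of $p$ and $q$ between the two sides of (\ref{conest}). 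Composing the four operator norms then yields (\ref{conest}) with $\alpha(\mathcal{A},p,q)$ equal to $|\det B|^{-1/2}$ times the product of the chirp and change-of-variables constants, each depending only on the blocks $A,B,D$ and on $p,q$.

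The main obstacle is the chirp factor. Multiplication by $e^{\frac{i}{2\hbar}Px\cdot x}$ is \emph{not} bounded on $L^{p}$ for $p\neq2$, so (\ref{conest}) cannot be reached by a naive H\"{o}lder or Young estimate on the integral kernel; the boundedness has to be extracted from the metaplectic (shear) structure together with the localization built into the amalgam norm, and one must verify that the shear genuinely preserves the exponents rather than transposing them. A secondary but unavoidable difficulty is the bookkeeping: tracking the dependence of $\alpha(\mathcal{A},p,q)$ on the blocks of $\mathcal{A}$ through the composition, and matching the normalizations of the Wiener amalgam and modulation spaces to those of Cordero and Nicola so that the exponents fall exactly as in their Theorem 4.1 --- an issue of conventions that, consistent with the stated aim of this Note, we do not pursue in detail.
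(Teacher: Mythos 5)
The first thing to note is that the paper contains \emph{no proof} of Proposition \ref{procono}: it is a verbatim quotation of Cordero and Nicola's Theorem 4.1, formula (25), from \cite{cono2}, and the note explicitly declines to discuss its validity (Proposition \ref{un} is invoked only later, for Proposition \ref{two}). So your proposal is really an attempt to reconstruct Cordero and Nicola's own argument, and your starting point is indeed theirs: for $\det B\neq 0$ one has $\mu(\mathcal{A})=i^{m-n/2}|\det B|^{-1/2}\,M_{DB^{-1}}\circ T_{B^{-1}}\circ\mathcal{F}_{\hbar}\circ M_{B^{-1}A}$, with $DB^{-1}$ and $B^{-1}A$ symmetric by symplecticity; you are also right that the factorization of Proposition \ref{un} is not needed here.

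The genuine gap is in the step you credit with the interchange of $p$ and $q$. Follow your own chain: $f\in W(\mathcal{F}L^{q},L^{p})$; the first chirp keeps it there; your isomorphism then sends it under $\mathcal{F}$ to $M^{q,p}$; the dilation and second chirp (granting their boundedness) keep it in $M^{q,p}$. You thus end in $M^{q,p}$, \emph{not} in the target $W(\mathcal{F}L^{p},L^{q})$. These are different spaces: they pair the same exponents with the same variables but integrate in the opposite order, and the passage between them (equivalently, the embedding $W(\mathcal{F}L^{q},L^{p})\hookrightarrow M^{p,q}$ inserted before the Fourier step) is exactly Minkowski's integral inequality, which holds only for $p\leq q$. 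This missing step, not the Fourier isometry, is where the exponents get swapped, and it cannot be avoided: for $p>q$ the estimate (\ref{conest}) is \emph{false}. Indeed take $\mathcal{A}=J$ (so $B=I$ and $\mu(J)$ is $\mathcal{F}_{\hbar}$ up to a phase); then (\ref{conest}) amounts to $W(\mathcal{F}L^{q},L^{p})\hookrightarrow M^{p,q}$, and already for $n=1$ a superposition of $N$ coherent states centered at the phase-space points $(k,k)$, $k=1,\dots,N$, has $W(\mathcal{F}L^{q},L^{p})$-norm $\asymp N^{1/p}$ but $M^{p,q}$-norm $\asymp N^{1/q}$, so no constant $\alpha(\mathcal{A},p,q)$ exists when $p>q$. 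Hence no argument can prove the Proposition as quoted ``for all $1\leq p,q\leq\infty$''; what the factorization yields is the case $p\leq q$.

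A second, related error is your parenthetical claim that chirps act boundedly on the modulation scale as well. Multiplication by $e^{\frac{i}{2\hbar}Px\cdot x}$ is bounded on $W(\mathcal{F}L^{p},L^{q})$ for all $p,q$ (the associated shear moves points only within the fibers $\{x\}\times\mathbb{R}^{n}$, which the amalgam norm integrates out first), but it is \emph{unbounded} on $M^{p,q}$ for $p\neq q$: chirping a comb of $N$ Gaussians along the $x$-axis produces precisely the diagonal configuration above, with norm ratio $N^{|1/q-1/p|}$. So the two factors following your Fourier step cannot be estimated in $M^{q,p}$ at all. The repair is to reorder the argument so that everything happens in the amalgam scale except the single middle step: chirp on $W(\mathcal{F}L^{q},L^{p})$, then Minkowski ($p\leq q$) into $M^{p,q}$, then the isometry $\mathcal{F}:M^{p,q}\to W(\mathcal{F}L^{p},L^{q})$, then dilation and chirp on $W(\mathcal{F}L^{p},L^{q})$. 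Finally, the restriction $p\leq q$ is not a pedantic point for this paper: since the derivation of Proposition \ref{two} applies (\ref{conest}) twice with the exponent pairs interchanged, it would require both $p\leq q$ and $q\leq p$ --- which is presumably what the author's disclaimer about ``not discussing the validity of Cordero and Nicola's claims'' is quietly pointing at.
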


We claim that:

\begin{proposition}
\label{two}Let $\mathcal{A}\in\operatorname*{Sp}(n)$. Then, for $1\leq
p,q\leq\infty$,%
\begin{equation}
||\mu(\mathcal{A})f||_{W(\mathcal{F}L^{p},L^{q})}\leq C_{\mathcal{A}%
,p,q}||f||_{W(\mathcal{F}L^{p},L^{q})} \label{monest}%
\end{equation}
where $C_{\mathcal{A},p,q}>0$ only depends on $\mathcal{A},p,q$.
\end{proposition}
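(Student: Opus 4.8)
The plan is to reduce the general statement to Cordero and Nicola's estimate (Proposition \ref{procono}) by combining the factorization of Proposition \ref{un} with the crucial structural feature of (\ref{conest}): that inequality \emph{interchanges} the two Lebesgue exponents, bounding the $W(\mathcal{F}L^{p},L^{q})$ norm of the output by the $W(\mathcal{F}L^{q},L^{p})$ norm of the input. A single application therefore carries the pair $(p,q)$ to $(q,p)$; the idea is that composing two such estimates will swap the exponents back and produce a genuine boundedness result on the fixed amalgam space $W(\mathcal{F}L^{p},L^{q})$.

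First I would use Proposition \ref{un}, together with the Remark following it, to write an arbitrary $\mathcal{A}\in\operatorname*{Sp}(n)$ as a product $\mathcal{A}=\mathcal{A}_{1}\mathcal{A}_{2}$ of two symplectic matrices of the free type (\ref{free}), and correspondingly to choose free metaplectic operators whose product realizes $\mu(\mathcal{A})$, say
\[
\mu(\mathcal{A})=\mu(\mathcal{A}_{1})\,\mu(\mathcal{A}_{2}),
\]
with each factor of the form (\ref{sw1}). This is precisely the step that removes the hypothesis $\det B\neq0$ built into Proposition \ref{procono}: even when $\mathcal{A}$ itself is not of the free type, both $\mathcal{A}_{1}$ and $\mathcal{A}_{2}$ are, so Cordero and Nicola's estimate applies to each factor separately.

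Next I would apply (\ref{conest}) twice. Setting $g=\mu(\mathcal{A}_{2})f$ and using the estimate for the free operator $\mu(\mathcal{A}_{1})$ with exponents $(p,q)$ gives
\[
\|\mu(\mathcal{A}_{1})g\|_{W(\mathcal{F}L^{p},L^{q})}\leq\alpha(\mathcal{A}_{1},p,q)\,\|g\|_{W(\mathcal{F}L^{q},L^{p})}.
\]
The right-hand side now involves the $W(\mathcal{F}L^{q},L^{p})$ norm of $\mu(\mathcal{A}_{2})f$, so I would invoke (\ref{conest}) for $\mu(\mathcal{A}_{2})$ \emph{with the roles of $p$ and $q$ exchanged} — which is legitimate since Proposition \ref{procono} is stated for all $1\leq p,q\leq\infty$ — to get
\[
\|\mu(\mathcal{A}_{2})f\|_{W(\mathcal{F}L^{q},L^{p})}\leq\alpha(\mathcal{A}_{2},q,p)\,\|f\|_{W(\mathcal{F}L^{p},L^{q})}.
\]
Chaining the two displays yields (\ref{monest}) with $C_{\mathcal{A},p,q}=\alpha(\mathcal{A}_{1},p,q)\,\alpha(\mathcal{A}_{2},q,p)$, and since the factorization depends only on $\mathcal{A}$ while each $\alpha$ depends only on its matrix and the exponents, the constant depends only on $\mathcal{A},p,q$, as claimed.

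The argument is structurally very short and I do not anticipate a real analytic obstacle; the substance is essentially bookkeeping around the exponent swap. The one point that requires genuine care — and which is really the heart of the matter — is confirming that the two factors furnished by Proposition \ref{un} are exactly operators of the type (\ref{sw1}) to which Proposition \ref{procono} applies, and that the two applications of (\ref{conest}) are matched so that the swapped exponents cancel rather than accumulate. Once that matching is in place, the composition automatically converts Cordero and Nicola's index-reversing estimate into the index-preserving bound (\ref{monest}).
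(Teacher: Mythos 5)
Your proposal is correct and is precisely the argument the paper intends: its one-line proof (``It immediately follows from (\ref{conest}) using Proposition \ref{un}'') is exactly your factorization of $\mu(\mathcal{A})$ into two free metaplectic operators of type (\ref{sw1}), followed by two applications of Cordero and Nicola's exponent-swapping estimate so that the pair $(p,q)$ is mapped to $(q,p)$ and back. You have simply made explicit the bookkeeping that the paper leaves to the reader, including the correct matching of exponents and the resulting constant $C_{\mathcal{A},p,q}=\alpha(\mathcal{A}_{1},p,q)\,\alpha(\mathcal{A}_{2},q,p)$.
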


\begin{proof}
It immediately follows from (\ref{conest}) using Proposition \ref{un}.
\end{proof}

Thus:

\begin{corollary}
Metaplectic operators are continuous operators on the Wiener amalgam spaces
$W(\mathcal{F}L^{p},L^{q})$, $1\leq p,q\leq\infty$.
\end{corollary}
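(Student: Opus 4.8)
The plan is to read the Corollary off directly from Proposition \ref{two}, together with the elementary fact that a bounded linear map between normed spaces is continuous. First I would recall that every element of $\operatorname*{Mp}(n)$ is one of the two operators $\pm\mu(\mathcal{A})$ attached to some $\mathcal{A}\in\operatorname*{Sp}(n)$, and that these operators are linear (indeed unitary on $L^{2}(\mathbb{R}^{n})$). Since the two sign choices differ only by the global factor $-1$, the estimate, and hence the conclusion, is insensitive to the sign, so it suffices to treat $\mu(\mathcal{A})$.

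Next I would invoke Proposition \ref{two}: the bound (\ref{monest}) asserts precisely that $\mu(\mathcal{A})$ maps $W(\mathcal{F}L^{p},L^{q})$ into itself with operator norm at most $C_{\mathcal{A},p,q}$. Because the Wiener amalgam spaces $W(\mathcal{F}L^{p},L^{q})$ are Banach spaces for all $1\leq p,q\leq\infty$, this operator-norm bound is exactly the statement that $\mu(\mathcal{A})$ is a bounded, hence continuous, linear operator on $W(\mathcal{F}L^{p},L^{q})$. Letting $\mathcal{A}$ range over all of $\operatorname*{Sp}(n)$ then yields the Corollary verbatim.

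The only place where anything nontrivial happens -- and so the step I would flag as the real obstacle -- is the coincidence of the source and target spaces. Proposition \ref{procono} as stated by Cordero and Nicola bounds $\mu(\mathcal{A})$ only as a map $W(\mathcal{F}L^{q},L^{p})\to W(\mathcal{F}L^{p},L^{q})$, with the exponents $p,q$ interchanged on the right-hand side, and moreover only for the free matrices (\ref{free}); by itself it does not exhibit $\mu(\mathcal{A})$ as an endomorphism of a single fixed amalgam space. The repair, already carried out in Proposition \ref{two}, is to factor $\mu(\mathcal{A})$ into a product of two operators of the free type (\ref{sw1}) by Proposition \ref{un}, and to apply estimate (\ref{conest}) to each factor: the left factor is estimated with parameters $(p,q)$ and the right factor with the interchanged parameters $(q,p)$, so that the two interchanges cancel and one returns to $W(\mathcal{F}L^{p},L^{q})$. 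This is the genuine content behind the Corollary, and the reason it holds for arbitrary $\mathcal{A}\in\operatorname*{Sp}(n)$ and not merely for the free symplectic matrices to which Proposition \ref{procono} literally applies. Granting Proposition \ref{two}, however, the Corollary itself requires nothing beyond the boundedness-equals-continuity principle.
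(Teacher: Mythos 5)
Your proposal is correct and follows the paper's own route exactly: the Corollary is read off from Proposition \ref{two} (boundedness of $\mu(\mathcal{A})$ on $W(\mathcal{F}L^{p},L^{q})$ implies continuity), and your explanation of why Proposition \ref{two} holds --- factoring $\mu(\mathcal{A})$ via Proposition \ref{un} into two free-type operators and applying the Cordero--Nicola estimate (\ref{conest}) twice so the exponent swaps cancel --- is precisely the paper's proof of that proposition. Nothing further is needed.
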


\begin{remark}
This result can actually be obtained in a simpler way, using different methods
(the Weyl representation of metaplectic operators studied in de Gosson
\cite{lett}).
\end{remark}

\section{Schr\"{o}dinger's equation}

Consider the time-dependent Schr\"{o}dinger equation%
\begin{equation}
i\hbar\frac{\partial}{\partial t}f(x,t)=H_{\text{Weyl}}f(x,t)\text{ \ ,
\ }f(x,0)=f_{0}(x) \label{sch}%
\end{equation}
where the operator $H_{\text{Weyl}}$ is the partial differential operator with
Weyl symbol%
\[
H(x,p)=\frac{1}{2}(x,p)M(x,p)^{T}%
\]
where $M$ is a real symmetric $2n\times2n$ matrix. The corresponding Hamilton
equations of motion%
\[
\frac{dx}{dt}=\frac{\partial H}{\partial p}\text{ \ , \ }\frac{dp}{dt}%
=-\frac{\partial H}{\partial x}%
\]
are then linear; the Hamiltonian flow determined by these equations is thus a
one-parameter subgroup $(\mathcal{A}_{t})$ of $\operatorname*{Sp}(n)$. In view
of the lifting theorem of algebraic topology (see e.g. \cite{GS2}), there
exists a \emph{unique} one-parameter group $(\mu(\mathcal{A}_{t}))$ of
$\operatorname*{Mp}(n)$ covering $(\mathcal{A}_{t})$.

\begin{proposition}
\label{deux}Assume $f_{0}\in L^{2}(\mathbb{R}^{n})$ and set $f(x,t)=\mu
(\mathcal{A}_{t})f_{0}(x)$. The function $f$ is the (unique) solution of
Schr\"{o}dinger's equation (\ref{sch}).
\end{proposition}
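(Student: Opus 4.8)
The plan is to exhibit $(\mu(\mathcal{A}_t))$ as the unitary propagator of a self-adjoint generator and then to identify that generator with $H_{\text{Weyl}}$. By the lifting theorem already used, $t\mapsto\mu(\mathcal{A}_t)$ is a strongly continuous one-parameter group of unitaries on $L^2(\mathbb{R}^n)$; Stone's theorem therefore furnishes a densely defined self-adjoint operator $\widehat{H}$ with $\mu(\mathcal{A}_t)=e^{-it\widehat{H}/\hbar}$. Differentiating $f(\cdot,t)=\mu(\mathcal{A}_t)f_0$ on the domain of $\widehat{H}$ gives $i\hbar\,\partial_t f=\widehat{H}f$ with $f(\cdot,0)=f_0$, so the whole statement reduces to proving $\widehat{H}=H_{\text{Weyl}}$ (existence), together with the standard uniqueness for the abstract Cauchy problem of a self-adjoint operator (uniqueness).

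To compute $\widehat{H}$ I would work at the Lie-algebra level. Writing Hamilton's equations as $\dot z=JM z$ identifies $(\mathcal{A}_t)=(e^{tJM})$ as the subgroup of $\operatorname*{Sp}(n)$ generated by $X=JM\in\mathfrak{sp}(n)$, so that $\widehat{H}=i\hbar\,d\mu(X)$, where $d\mu$ is the derived metaplectic representation. The content to be verified is that $d\mu(JM)=-\tfrac{i}{\hbar}H_{\text{Weyl}}$, with $H(z)=\tfrac12 z^{T}Mz$. I would obtain this by differentiating the symplectic covariance relation $\mu(\mathcal{A})\widehat{a}\,\mu(\mathcal{A})^{-1}=\widehat{a\circ\mathcal{A}^{-1}}$ of the Weyl calculus along $\mathcal{A}_t$ at $t=0$: this yields the commutation relation $[d\mu(JM),\widehat{a}]=\widehat{\{H,a\}}$ for every symbol $a$, and since $-\tfrac{i}{\hbar}H_{\text{Weyl}}$ satisfies the same relation (exactly, because $H$ is quadratic), the two operators can differ at most by a scalar. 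It suffices to check the covariance relation on the quadratic monomials $x_ix_j$, $p_ip_j$, $x_ip_j$ that span the space of quadratic forms and then extend by linearity.

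The step I expect to require the most care is precisely this identification, for two reasons. First, the commutation relation pins down $d\mu(JM)$ only modulo an additive scalar: the Weyl operators act irreducibly on $L^2(\mathbb{R}^n)$, so anything commuting with all of them is constant, and this undetermined constant---into which the Maslov index enters---must be fixed by one explicit normalization, for instance by letting both sides act on a fixed Gaussian. Second, one cannot shortcut the computation by differentiating the kernel formula (\ref{sw1}) at $t=0$, because $\mathcal{A}_0=I$ has vanishing block $B$ and hence lies on the caustic where (\ref{sw1}) degenerates; this is exactly the configuration Proposition \ref{un} is built to handle, and an alternative route is to factor each $\mu(\mathcal{A}_t)$ into two free operators with smoothly $t$-dependent generating functions and to recover $\widehat{H}$ by a limiting argument. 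Once $\widehat{H}=H_{\text{Weyl}}$ has been established, existence of the solution is immediate and uniqueness follows from Stone's theorem, since any $L^2$-solution of (\ref{sch}) with datum $f_0$ must coincide with $e^{-itH_{\text{Weyl}}/\hbar}f_0=\mu(\mathcal{A}_t)f_0$.
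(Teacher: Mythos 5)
The paper does not actually prove Proposition \ref{deux}: its ``proof'' is the citation \cite{GS2,Birk}, where this classical fact is established. Your sketch reconstructs, in substance, the argument contained in that literature: realize $(\mu(\mathcal{A}_t))$ via Stone's theorem as $e^{-it\widehat{H}/\hbar}$ for a self-adjoint generator $\widehat{H}$, then identify $\widehat{H}$ with $H_{\text{Weyl}}$ by differentiating the symplectic covariance of the Weyl calculus, with Schur's lemma reducing the identification to an additive constant that a single explicit normalization removes. You also correctly isolate the two genuine difficulties: the undetermined phase constant (which is exactly where the double cover and the Maslov index enter --- for the harmonic oscillator this is the statement that the metaplectic lift of the rotation group has period $4\pi$, not $2\pi$, matching the zero-point energy $\tfrac{1}{2}\hbar$), and the degeneracy of the kernel formula (\ref{sw1}) at $t=0$, where $\mathcal{A}_0=I$ has $B=0$; the factorization of Proposition \ref{un} is indeed the standard remedy, and it is needed in general, since for some quadratic Hamiltonians $\det B(t)$ vanishes identically rather than just at $t=0$.

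One step as written needs repair: your closing reduction, ``it suffices to check the covariance relation on the quadratic monomials,'' does not by itself support the Schur argument. The operators quantizing quadratic forms generate precisely the (derived) metaplectic representation, which is \emph{not} irreducible on $L^{2}(\mathbb{R}^{n}）$ --- it splits into the even and odd subspaces --- so an operator commuting with all of them is only forced to be a scalar on each parity subspace, leaving \emph{two} undetermined constants. The irreducible family you must commute against is the Heisenberg--Weyl (phase-space translation) operators, equivalently the Weyl operators with \emph{linear} symbols: if $[d\mu(JM)+\tfrac{i}{\hbar}H_{\text{Weyl}},\widehat{a}]=0$ for all linear $a$, then the difference commutes with the Schr\"{o}dinger representation of the Heisenberg group, hence is a genuine scalar by Stone--von Neumann irreducibility. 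Fortunately your own derivation supplies exactly this, since you differentiate the covariance relation for \emph{every} symbol $a$ (and the relation $\tfrac{i}{\hbar}[H_{\text{Weyl}},\widehat{a}]=\widehat{\{H,a\}}$ is exact for quadratic $H$ and any $a$, the Moyal corrections involving third derivatives of $H$); so the fix is only to invoke the relation for linear $a$ instead of quadratic $a$, after which one Gaussian computation (via the two-factor decomposition, since (\ref{sw1}) is unavailable at $t=0$) fixes the remaining constant and completes existence; uniqueness is then standard Stone theory, as you say.
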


\begin{proof}
See \cite{GS2,Birk}.
\end{proof}

An immediate consequence of this classical result is:

\begin{corollary}
Assume that $f_{0}\in W(\mathcal{F}L^{p},L^{q}).$ Then $f(,\cdot t)\in
W(\mathcal{F}L^{p},L^{q})$ for all $t$.
\end{corollary}

\begin{proof}
It immediately follows from Proposition \ref{deux} using the estimate
(\ref{monest}) in Proposition \ref{two}.
\end{proof}

\section{Conclusion}

The usefulness of Feichtinger's Wiener amalgam spaces in quantum mechanics is
obvious, as is the usefulness of Feichtinger's algebra \cite{HF1}, which
deserves to be investigated separately. We will come back in a forthcoming
paper \cite{GL} to a precise study of then regularity of the solutions to
Schr\"{o}dinger's equation for quite arbitrary Hamiltonians (i.e. not
necessarily associated to a quadratic Hamiltonian) in terms of these spaces.

\end{document}